\documentclass[conference]{IEEEtran}

\usepackage{amsmath, amsthm, amssymb}
\usepackage{latexsym}
\usepackage{graphicx}
\usepackage{verbatim}
\usepackage{mathrsfs}
\usepackage{float} %Stay right here please
\usepackage[lofdepth, lotdepth]{subfig}
\usepackage{array}
\newcolumntype{P}[1]{>{\centering\arraybackslash}p{#1}}
\usepackage{url}
\usepackage{algorithm}
\usepackage{algpseudocode}
\captionsetup[figure]{font=small}

\usepackage{cite}
\usepackage[table]{xcolor}
\usepackage{enumerate}
\usepackage{enumitem}
\usepackage{eucal}
\usepackage{csquotes}
\usepackage{stmaryrd}
\usepackage{mathtools}
\usepackage{pgf}
\usepackage{tikz}
\usetikzlibrary{arrows,automata}
\usepackage[latin1]{inputenc}
\usetikzlibrary{automata,positioning}
\usepackage[hidelinks, bookmarks=false]{hyperref}

\usepackage{mathtools}
\DeclarePairedDelimiter\ceil{\lceil}{\rceil}
\DeclarePairedDelimiter\floor{\lfloor}{\rfloor}

%\usepackage{algpseudocode}
%\floatname{algorithm}{Procedure}

% Generated with LaTeXDraw 2.0.8
% Mon May 09 15:51:58 SGT 2011
%\usepackage[usenames,dvipsnames]{pstricks}
%\usepackage[crop=off]{auto-pst-pdf}
%\usepackage{epsfig}
%\usepackage{pst-grad} % For gradients
%\usepackage{pst-plot} % For axes

\theoremstyle{plain}
\newtheorem{theorem}{Theorem}

\newtheorem{lemma}{Lemma}

\theoremstyle{definition}
\newtheorem{definition}{Definition}

\newtheorem{remark}{Remark}

%%%%%%%%%%%%%%%%%%%%%%%%%%%%%%%%%%%%%%%%%%%%%%%%%%%%%
%\mathcal{}

\newcommand{\B}{{\mathcal B}}
\newcommand{\C}{{\mathcal C}}

%\mathcal{}
%%%%%%%%%%%%%%%%%%%%%%%%%%%%%%%%%%%%%%%%%%%%%%%%%%%%%

%%%%%%%%%%%%%%%%%%%%%%%%%%%%%%%%%%%%%%%%%%%%%%%%%%%%%
%\mathbfsl{}

\DeclareMathAlphabet{\mathbfsl}{OT1}{ppl}{b}{it} %{OT1}{cmr}{bx}{it}

\newcommand{\bS}{{\mathbfsl{S}}}

\newcommand{\bL}{\mathbfsl{L}}
\newcommand{\bA}{{\mathbfsl A}}

\newcommand{\by}{{\mathbfsl y}}

\newcommand{\bw}{{\mathbfsl{w}}}

\newcommand{\bx}{{\mathbfsl{x}}}
\newcommand{\bz}{{\mathbfsl{z}}}

%\mathbfsl
%%%%%%%%%%%%%%%%%%%%%%%%%%%%%%%%%%%%%%%%%

%%%%%%%%%%%%%%%%%%%%%%%%%%%%%%%%%%%%%%%%%%%%%%%%%%%%%%%%
%\mathbb

%\mathbb
%%%%%%%%%%%%%%%%%%%%%%%%%%%%%%%%%%%%%%%%%%%%%%%%%%%%%%%%

%%%%%%%%%%%%%%%%%%%%%%%%%%%%%%%%%%%%%%%%%%%%%%%%%%%%%%%
%\mathfrak

%\mathfrak
%%%%%%%%%%%%%%%%%%%%%%%%%%%%%%%%%%%%%%%%%%%%%%%%%%%%%%%
\newcommand{\enc}{\textsc{Enc}}
\newcommand{\dec}{\textsc{Dec}}

%%%%%%%%%%%%%%%%%%%%%%%%%%%%%%%%%%%%%%%%%%%%%%%%%%%%%%%%
%Arrows

%Arrows
%%%%%%%%%%%%%%%%%%%%%%%%%%%%%%%%%%%%%%%%%%%%%%%%%%%%%%%%

%%%%%%%%%%%%%%%%%%%%%%%%%%%%%%%%%%%%%%%%%%%%%%%%%%%%%%%%
%Greek characters

%Greek characters
%%%%%%%%%%%%%%%%%%%%%%%%%%%%%%%%%%%%%%%%%%%%%%%%%%%%%%%%

%%%%%%%%%%%%%%%%%%%%%%%%%%%%%%%%%%%%%%%%%%%%%%%%%%%%%%%%
%Miscellaneous

%%%%%%%%%%%%%%%%%%%%%%%%%%%%%%%%%%%%%%%%%%%%%%%%%%%%%
%\mathscr
%\newcommand{\cC}{{\mathscr{C}}}
%\newcommand{\cG}{{\mathscr{G}}}
%\newcommand{\cV}{{\mathscr{V}}}
%\newcommand{\cE}{{\mathscr{E}}}
%\newcommand{\cH}{{\mathscr{H}}}
%\mathscr
%%%%%%%%%%%%%%%%%%%%%%%%%%%%%%%%%%%%%%%%%%%%%%%%%%%%%

\renewcommand{\ge}{\geqslant}
\renewcommand{\le}{\leqslant}

\newcommand{\et}{{\emph{et al.}}}

\IEEEoverridecommandlockouts

\def\BibTeX{{\rm B\kern-.05em{\sc i\kern-.025em b}\kern-.08em
    T\kern-.1667em\lower.7ex\hbox{E}\kern-.125emX}}

\begin{document}

\pagestyle{empty}

%TITILE OF THE PAPER 

\title{From One-Dimensional Codes to Two-Dimensional Codes: A Universal Framework for the Bounded-Weight Constraint\\[-3mm]}

% AUTHORS' INFORMATION 

% \author{\IEEEauthorblockN{{\bf Anonymous Authors}}\\[-3mm]

\author{\IEEEauthorblockN{
Viet Hai Le\IEEEauthorrefmark{1},
Thanh Phong Pham\IEEEauthorrefmark{2},
Tuan Thanh Nguyen\IEEEauthorrefmark{1},
Kui Cai\IEEEauthorrefmark{1},
and Kees A. Schouhamer Immink\IEEEauthorrefmark{3}}\\[-3mm]

\IEEEauthorblockA{
\IEEEauthorrefmark{1}
Singapore University of Technology and Design, Singapore 487372\\
\IEEEauthorrefmark{2}%
Nanyang Technological University, Singapore 639798\\
\IEEEauthorrefmark{3}%
Turing Machines Inc, Willemskade 15d, 3016 DK Rotterdam, The Netherlands\\
%Emails: hai\_le@mymail.sutd.edu.sg, phong003@e.ntu.edu.sg, \{tuanthanh\_nguyen, cai\_kui\}@sutd.edu.sg, immink@turing-machines.com\\
[-4mm]
}
\thanks{This work was supported by the SUTD Kickstarter Initiative (SKI) Grant 2021\_04\_05 and the Singapore Ministry of Education Academic Research Fund Tier 2 T2EP50221-0036.}
}
\maketitle

%SECTION ABSTRACT
\hspace{-3mm}\begin{abstract}
%``THIS PAPER IS ELIGIBLE FOR THE STUDENT PAPER AWARD". 
%Constrained codes are useful in various applications in communication and data storage systems.  
Recent developments in storage- especially in the area of resistive random access memory (ReRAM)- are attempting to scale the storage density by regarding the information data as two-dimensional (2D),  instead of one-dimensional (1D). Correspondingly, new types of 2D constraints are introduced into the input information data to improve the system reliability. While 1D constraints have been extensively investigated in the literature, the study for 2D constraints is much less profound. Particularly, given a constraint $\mathcal{F}$ and a design of 1D codes whose codewords satisfy $\mathcal{F}$, the problem of constructing efficient 2D codes, such that every row and every column in every codeword satisfy $\mathcal{F}$, has been a challenge.
\vspace{1mm}
%In this work, we study the following coding problem: ``given a constraint $F$, 

%In this work, we focus on the bounded-weight constraint, that restricts the maximum number of 1's in every codeword. 

This work provides an efficient solution to the challenging coding problem above for the binary bounded-weight constrained codes that restrict the maximum number of $1$'s (called {\em weight}). Formally, we propose a universal framework to design 2D codes that guarantee the weight of every row and every column of length $n$ to be at most $f(n)$ for any given function $f(n)$. We show that if there exists a design of capacity-approaching 1D codes, then our method also provides capacity-approaching 2D codes for all $f=\omega(\log n)$.

\end{abstract}

\section{Introduction}
Binary weight-constrained codes are useful in various applications in the fields of computer science, information security, communications, and data storage systems. For example, the {\em constant-weight} (CW) codes, whose codewords share the same {\em Hamming weight}, have been used in several applications including frequency hopping in GSM networks \cite{mambou:2017,smith:2006}, optical and magnetic recording \cite{vitaly:2014, immink:book}. Particularly, the {\em balanced codes}, in which every codeword has an equal number of
$0$'s and $1$'s, have many applications in both classical and modern communications and data storage systems such as digital optical disks \cite{immink:book,lei84}, data synchronization \cite{ofe1990, alon:1988}, and the emerging DNA-based data storage \cite{tu:2024, thanh:dna, hec:2019, dube:2019}. In this work, we study the {\em bounded-weight} constrained codes, which restrict the maximum number of $1$'s in every codeword and have been used widely to tackle several challenging communications and storage systems issues, especially in the area of resistive random access memory (ReRAM).  
\vspace{0.5mm}

We first describe briefly certain applications of the bounded-weight constraint, including 1D codes and 2D codes. For instance, such 1D codes have been used to prevent energy outage at a receiver having finite energy storage capability in simultaneous energy and information transfer \cite{bound1d3, tandon:2016,HM:2021}, or to prevent high temperatures in electronic devices \cite{chee:TA2024, chee:TA2023, immink:survey}. On the other hand, the 2D bounded-weight codes, which restrict the maximum number of $1$'s in every row and every column of an array codeword, have been suggested as an effective solution to reduce the {\em sneak path interference}, a fundamental and challenging problem in ReRAM \cite{ordentlich2012low, nguyen2021efficient, nguyen2023locally,nguyen2023two,cassuto2013sneak,cassuto2016information}. %Particularly, for arrays of size $n \times n$, Ordentlich and Roth \cite{ordentlich2012low} presented efficient encoding algorithms to enforce the weight of every row and every column to be at most $n/2$, and Nguyen \et{} \cite{nguyen2023two} proposed a general construction to restrict the weight of every row and every column to be at most $pn$ for a given constant $p \in (0, 1)$. 
\vspace{0.5mm}

While 1D constraints have been extensively investigated, the study for 2D constraints is much less profound. %The main reason is due to the provable difficulty of 2D constraints compared to 1D constraints.
For instance, for the 1D bounded-weight constraint, there are several efficient prior-art coding methods for designing 1D codes with optimal redundancy or almost optimal (up to a constant number of redundant bits away from the optimal redundancy), particularly including the {\em enumeration coding technique} \cite{immink:book}, {\em Knuth's balancing technique} \cite{knuth:1986,bound1d3, vitaly:2014},  the classic {\em polarity bit code} \cite{immink:book,bound1d3}, and the {\em sequence replacement technique} \cite{bound1d3, gabry2018, gabrys:2020}. Among them, the enumeration coding technique has been considered an efficient solution providing optimal redundancy, despite its disadvantages of high complexity and high error propagation \cite{immink:book, chee:tandem, thanh:dna, xhe:2022}. Nevertheless, using such a technique to handle 2D constraints has remained a challenge.  
%\vspace{0.5mm}
In this work, we are interested in the following coding problem: 
\begin{displayquote}
Given a constraint $\mathcal{F}$ (or generally, a finite set of constraints $\mathcal{F}$) and an efficient construction of 1D codes whose codewords satisfy $\mathcal{F}$, how to extend the construction to 2D codes, i.e. every row and every column in every 2D codeword satisfy $\mathcal{F}$?
\end{displayquote}

Our main contribution is to provide an efficient solution to the challenging problem above when $\mathcal{F}$ is a bounded-weight constraint. Formally, we propose a universal framework to design 2D codes that guarantee the weight of every row and every column of length $n$ to be at most $f(n)$ for any given function $f(n)$, through efficient encoding and decoding algorithms. We show that if a design of capacity-approaching 1D codes exists, then our encoder provides capacity-approaching 2D codes for all $f=\omega(\log n)$. Our results extend the study of literature works that specifically require $f(n)=n/2$ as proposed by Ordentlich and Roth \cite{ordentlich2012low}, or $f(n)=pn$ for a constant $p\in(0,1)$ as proposed by Nguyen \et{} \cite{nguyen2023two}.
\vspace{0.5mm}

\section{Preliminary}\label{sec:prelim}

%{\color{blue}{to be edited...}}

%\subsection{Notation}\label{notation}
%We first go through certain notations.

Given two sequences $\bx=x_1\ldots x_{n_1}$ and $\by=y_1\ldots y_{n_2}$, the \emph{concatenation} of the two sequences is defined by $\bx \by \triangleq x_1\ldots x_{n_1} y_1\ldots y_{n_2}$. 
\vspace{1 mm}

For a binary sequence $\bx$, we use $\operatorname{wt}(\bx)$ to denote the {\em weight} of $\bx$, i.e. the number of ones in $\bx$. In addition, we use $\bar{\bx}$ to denote the complement of $\bx$. For example, if $\bx = 110011$, then $\operatorname{wt}(\bx)=4$ and $\overline{\bx}=001100$. 

\begin{definition}\label{def-swap}

Given $\by=y_1y_2\ldots y_n, \bz=z_1z_2\ldots z_n$. For $1\le t\le n$, we use ${\rm Swap}_t(\by,\bz)$, ${\rm Swap}_t(\bz,\by)$ to denote the sequences obtained by swapping the first $t$ bits of $\by$ and $\bz$, i.e.
\begin{align*}
{\rm Swap}_t(\by,\bz) &= z_1z_2\ldots z_t y_{t+1}y_{t+2}\ldots y_n, \text{ and } \\
{\rm Swap}_t(\bz,\by) &= y_1y_2\ldots y_t z_{t+1}z_{t+2}\ldots z_n. 
\end{align*}
\end{definition}

\noindent{\em Big-O, little-o, big-omega and little-omega functions}. 

Given two real functions $f$ and $g$, we say that $f$ is big-O of $g$ and write $f=O(g)$, if there exists a positive real number $M$ and a real number $x_0$ such that $|f(x)|\le M g(x)$ for all $x\ge x_0$. We say that $f$ is little-o of $g$ and write $f=o(g)$ if $\lim_{x\to \infty} f(x)/g(x)=0$. 

On the other hand, $f$ is big-omega of $g$, denoted by $\Omega(g)$, if there exists a positive real number $M$ and a real number $x_0$ such that $f(x)\ge M |g(x)|$ for all $x\ge x_0$. We say $f$ is little-omega of $g$, denoted by $f=\omega(g)$, if $\lim_{x\to \infty} f(x)/g(x)=\infty$, or equivalently, $\lim_{x\to \infty} g(x)/f(x)=0$.

%In \cite{nguyen2023two}, Nguyen \et{} proposed a general construction of 2D codes of size $n \times n$,  in which the weight of every row and every column is at most $pn$ for a given constant $p \in (0, 1)$. The authors defined 
\begin{definition}
Given a function $f$, a sequence $\bx\in\{0,1\}^n$ is called $f$-bounded if its weight satisfies that $\operatorname{wt}(\bx)\le f(n)$. 
%Given $p\in(0,1)$, a binary sequence $\bx$ of length $n$ is called $p$-bounded if its weight satisfies that $\operatorname{wt}(\bx)\le pn$.     
\end{definition}

Let \(\bA_{n}\) denote the set of all \(n \times n\) binary arrays. For an array \(A \in \bA_{n}\), the $i$th row is denoted by \(A_{i}\), the $j$th column is denoted by \(A^{j}\), and the index bit at the (row, column) location $(i,j)$ is denoted by $a_{i,j}$. Note that an $n\times n$ binary array $A$ can be viewed as a binary sequence of length $n^2$. We define $\Phi(A)$ as a binary sequence of length $n^2$ where bits of the array $A$ are read row by row. Clearly, $\Phi$ is a one-to-one mapping.
Similarly, we also write $\operatorname{wt}(A)$ to represent the number of ones in $A$. %and refer $A$ to be a $p$-bounded array if $\operatorname{wt}(A) \le pn^2$.  
%\vspace{1mm}
Given $n>0$ and a function $f$, where $0\le f(n)\le n$ for all $n$, we are interested in the set ${\B}(n;f) \subseteq \bA_{n}$, where:
\begin{align*}
    {\B}(n;f)=\Big\{ A \in \bA_{n}: &\operatorname{wt}(A_i) \le f(n) \text{ for } 1\le i\le n, \\ 
    &\operatorname{wt}(A^j) \le f(n) \text{ for } 1\le j\le n \Big\}.
\end{align*}
Given a function $f$, the channel capacity\footnote{In this work, for simplicity, we use the notation ``$\log$" without the base to refer to the logarithm of base two.} is defined by 
\begin{equation*}
    {\bf cap}_f=\lim_{n\to \infty}\frac{\log|{\B}(n;f)|}{n^2} \le 1.
\end{equation*}

A set $\C \subseteq {\B}(n;f)$ is called a 2D $f$-bounded code. The {\em rate} of a code $\C$, denoted by ${\bf r}_{\C}$, is measured by the value ${\bf r}_{\C}=(1/n^2)\log |\C|$, and the {\em redundancy} of a code $\C$, denoted by ${\bf Re}_{\C}$, is measured by the value ${\bf Re}_{\C}=n^{2} - \log|\C|$ (bits). A code $\C$ is a {\em capacity-approaching} 2D $f$-bounded code if:
\begin{equation*}
\lim_{n\to \infty} {\bf r}_{\C}={\bf cap}_f.
\end{equation*} 

%\end{itemize}
%\vspace{1mm}
%we show that ${\bf cap}_f=1$ for any $f=o(n)$. In addition, 

In this work, we design efficient coding methods that encode (decode) arbitrary binary data to (from) a code $\C\subseteq {\B}(n;f)$, and we aim to achieve capacity-approaching 2D codes. 

\begin{definition}\label{def2}
The map $\enc_f: \{0,1\}^{k} \rightarrow \{0,1\}^{n \times n}$ is a 2D $f$-bounded encoder if for all $\bx\in\{0,1\}^k$ we have $\enc_f(\bx) \in {\B}(n;f)$, and there exists a corresponding $f$-bounded decoder $\dec_f: \{0,1\}^{n \times n} \rightarrow \{0,1\}^{k}$ such that $\dec_f \circ \enc_f(\bx) = \bx$. 
\end{definition}

We observe that the set $\C=\{\enc_f(\bx): x\in \{0,1\}^{k}\}$ forms a 2D $f$-bounded code. The rate of $\C$ (or the rate of the encoder) is $k/n^{2}$, and the redundancy is $n^{2} - k$ (bits). 

\subsection{Previous Works}\label{previous-work} 

In \cite{ordentlich2012low}, for arrays of size $n \times n$, Ordentlich and Roth considered $f(n)=n/2$ and presented efficient encoders that used at most $2n-1$ redundant bits. In such a case, we have ${\bf cap}_f=1$ for $f(n)=n/2$. In \cite{nguyen2023two}, Nguyen \et{} studied $f(n)=pn$ for arbitrary constant $p\in(0,1)$, and it was shown in both \cite{nguyen2023two,ordentlich2000two} that ${\bf cap}_f<1$ for any constant $p<1/2$. %In this work, we extend the results of the literature works by introducing a universal framework that provides capacity-approaching
%2D codes for all $f(n) = \omega(\log n)$. 
\vspace{1mm}

\begin{figure}[H]
    \centering
    \includegraphics[width=0.35\textwidth]{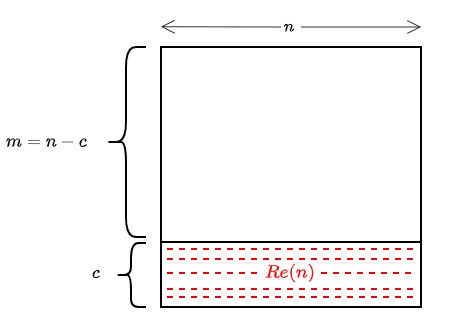}
   \caption{The common steps of our work and the encoding algorithm in \cite{nguyen2023two}. The efficiency of both encoding schemes depends on the value of $c$; the smaller the value of $c$, the lower the redundancy required.}\label{fig-encode}
    \end{figure}

We briefly describe the construction in \cite{nguyen2023two} for $f(n)=pn$ for a constant $p\in(0,1)$. Crucial to the encoding algorithm in \cite{nguyen2023two} are two values $m,c$, defined according to the given values of $n,p$ as follows:   
\begin{align}
    c &\ge \ceil{1/p}\Big( \ceil{\log n}+1 \Big), \text{ and } cp \text{ is an integer}, \label{c-bound1}\\
    m &= n-c. \nonumber
\end{align}
In particular, 
\begin{itemize}
    \item The first $m$ rows are simultaneously encoded to be $f$ bounded. The array is then divided into two subarrays of equal size, and the problem is reduced to enforce the constraint over two subarrays. The process is repeated until each subarray is a single column. For each subproblem, the authors \cite{nguyen2023two} {\em swap} the bits between two subarrays (according to Definition~\ref{def-swap}) until both subarrays satisfy the weight constraint. The number of swapped bits $t$ during each step is referred to as the {\em swapping index}. 
    \item The last $c$ rows are used to record all the swapping indices when encoding the first $m$ rows (referred to as $Re(n)$ in Figure~\ref{fig-encode}). 
\end{itemize}

Observe that one can use the encoding scheme in  \cite{nguyen2023two} to restrict the weight of every row and every column to be at most $f(n)$ by setting $p = f(n)/n$. Consequently, the parameter $c\in \mathbb{N}$ can be chosen as:
\begin{equation}
    c \geq \ceil{n/f(n)}\Big( \ceil{\log n} + 1 \Big),\text{where } \frac{cf(n)}{n} \in \mathbb{N}
    \label{eq:old}
\end{equation}

Determining a suitable value of $c$ is then non-trivial. When $f(n)=pn$ as in ~\eqref{c-bound1}, $p=O(1)$, and hence, $c_{~\eqref{c-bound1}}=O(\log n)$. 

\subsection{Our Novel Contribution}
%\noindent{\bf Our novel contribution.} 
In this work, we propose an efficient coding scheme that works for any arbitrary function $f(n)$. 

$(i)$ Our first contribution is to provide an explicit value of $c$ as follows:
\begin{equation}
    c = \ceil{n/f(n)} \times \ceil{\log n + 6}.
    \label{eq:new}
\end{equation}

Note that the bound in Equation~\eqref{eq:old} often results in a higher value of $c$ due to the additional requirement that $cf(n)/n \in \mathbb{N}$. For example, consider $f(n) = \frac{n}{2} - \log n$, where $n$ is a power of $2$. We then require
\begin{equation*}
    2c_{~\eqref{eq:old}} \cdot \frac{\log n}{n} \in \mathbb{N}, \text{ which implies: } c_{~\eqref{eq:old}} = \Omega\Big(\frac{n}{\log n} \Big).
\end{equation*}
On the other hand, we obtain $c_{~\eqref{eq:new}}=O(\log n)$, which is significantly smaller than $c_{~\eqref{eq:old}}$, and therefore improves overall redundancy. 

$(ii)$ We extend the swapping algorithm in \cite{nguyen2023two} so that it works for any arbitrary function $f$. A trivial solution is that, if there is always a subarray $\bS$ violating the weight constraint after some swapping step, one can simply flip some bits `$1 \to 0$' in $\bS$ to further reduce the weight in $\bS$. This flipping step requires additional redundancy (roughly about $\log |\bS|$ bits) to record the flipping bits' locations (or index). In this work, we design the swapping algorithm so that in the worst-case scenario, when there is always a subarray $\bS$ violating the weight constraint, the additional flipping step proceeds at exactly one index. Surprisingly, such a flipping operation (if needed) is always at the swapping index (refer to Lemma~\ref{main_lemma}), and hence, we only require two redundant bits to indicate a flipping action.

Our coding scheme is described in Section III.

% \begin{example}\label{exp1}
% ...
% \end{example}

% \begin{theorem}\label{tm1}
% ...
% \end{theorem}
% \begin{proof}

% \end{proof}

% \begin{corollary}\label{coro1}
% ...
% \end{corollary}
% \begin{proof}

% \end{proof}

% \begin{remark} 
% ...
% \end{remark}

\section{A Universal Framework for the 2D Bounded-Weight Constraint}\label{mainresult}

%In this section, we present an efficient coding method that encode (decode) arbitrary binary data to (from) a 2D $f$-bounded code $\C \subseteq \B(n;f)$. 
Suppose that there exists an 1D encoding map $\phi: \{0,1\}^k \to \{0,1\}^n$ such that for all $\bx\in \{0,1\}^k$, the sequence $\by=\phi(\bx)\in\{0,1\}^n$ is $f$-bounded, i.e. ${\rm wt}(\by) \le f(n)$. %We use ${\bf r}_{\phi}, {\bf Re}_\phi$ to denote the rate and the redundancy of this encoding map, respectively. 
%If the map $\phi$ provides capacity-approaching 1D codes, then our method provides capacity-approaching 2D codes for all $f=\omega(\log n)$ (see Theorem~\ref{main-theorem}). 
For arrays of size $n\times n$, we set 
\begin{equation*}
c = \Big\lceil\frac{n}{f(n)}\Big\rceil \lceil\log n + 6\rceil, \text{ and } m=n-c.
\end{equation*}

Our encoding algorithm works if $c<n$ or $f(n) \gtrsim \log n+6$.

A high-level description of our encoder is as follows. 
\begin{itemize}
\item {\bf (Using 1D codes)} The encoder uses the map $\phi$ to encode the first $m$ rows to be $f$-bounded, and obtains an array $A$ of size $m\times n$, whose weight is at most $\floor{mf(n)}$. 
\item {\bf (Divide-and-conquer procedure)} The encoder uses Lemma~\ref{main_lemma} to encode $A$ to an array $B$ of size $m\times n$ where each row remains as a $f$-bounded sequence, and the weight of each column is at most $\floor{mf(n)/n}$.
\item {\bf (Encoding the redundancy)} The encoder encodes the redundancy used in the divine-and-conquer procedure into an array $C$ of size $c \times n$ such that every row of $C$ is $f$-bounded, while every column weights is at most $\floor{cf(n)/n}$. Finally, the encoder outputs the concatenation of $B$ and $C$, which is an array $D$ of size $n\times n$. We observe that the weight of every column of $D$ is at most 
\begin{align*}
\floor{mf(n)/n} + \floor{cf(n)/n} &\le \floor{mf(n)/n+cf(n)/n} \\
&=\floor{f(n)} \le f(n).
\end{align*}
Thus, the output array $D\in \B(n;f)$. 
\end{itemize}

The following results are crucial to the divide-and-conquer procedure. Lemma~\ref{swap} is immediate. 

\begin{lemma} \label{swap}
 Let $\by=y_{1} y_{2} \ldots y_{n}$ and $\bz=z_{1} z_{2} \ldots z_{n}$. Consider $1\le t_1, t_2\le n$ where $\operatorname{wt}(\operatorname{Swap}_{t_1}(\boldsymbol{y}, \boldsymbol{z})) = \alpha$ and $\operatorname{wt}(\operatorname{Swap}_{t_2}(\boldsymbol{y}, \boldsymbol{z})) = \beta$ with $\alpha < \beta$. For arbitrary integer $\gamma \in [\alpha,\beta]$, there exists an index $t$ between $t_1$ and $t_2$ such that $\operatorname{wt}(\operatorname{Swap}_{t}(\boldsymbol{y}, \boldsymbol{z})) = \gamma$.
\end{lemma}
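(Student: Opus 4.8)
The plan is to treat $w(t) \define \operatorname{wt}(\operatorname{Swap}_t(\by,\bz))$ as an integer-valued function of the swapping index $t$ and to prove a discrete intermediate value property for it. Writing out the weight explicitly from Definition~\ref{def-swap},
$$w(t) = \sum_{i=1}^{t} z_i + \sum_{i=t+1}^{n} y_i,$$
I would first establish that $w$ can change by at most one when $t$ is incremented by one. This ``unit-step'' behavior is exactly the discrete analogue of continuity that will let me fill in every intermediate value.

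The key step is the increment computation. Passing from $t$ to $t+1$ replaces the contribution $y_{t+1}$ (taken from $\by$) by $z_{t+1}$ (taken from $\bz$), so
$$w(t+1) - w(t) = z_{t+1} - y_{t+1}.$$
Since $y_{t+1}, z_{t+1} \in \{0,1\}$, this difference lies in $\{-1,0,1\}$; that is, $|w(t+1) - w(t)| \le 1$ for every admissible $t$. This single identity is the entire engine of the argument.

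With this in hand, I would invoke the discrete intermediate value theorem. Assume without loss of generality that $t_1 \le t_2$, the case $t_1 > t_2$ being symmetric after traversing the index in the opposite direction. We have $w(t_1) = \alpha$, $w(t_2) = \beta$, and $\alpha < \beta$. As $t$ runs through the consecutive integers $t_1, t_1+1, \ldots, t_2$, the value $w(t)$ starts at $\alpha$, ends at $\beta$, and never jumps by more than one in absolute value. Consequently, for any target integer $\gamma$ with $\alpha \le \gamma \le \beta$, the finite sequence $w(t_1), w(t_1+1), \ldots, w(t_2)$ must attain the value $\gamma$ at some index $t \in [t_1,t_2]$: concretely, one takes the smallest $t$ with $w(t) \ge \gamma$ and uses the unit-step bound to rule out overshooting, forcing $w(t) = \gamma$. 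This produces the required index between $t_1$ and $t_2$.

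I do not expect any genuine obstacle here. The only points needing slight care are the bookkeeping for the two possible orderings of $t_1$ and $t_2$ and phrasing the discrete intermediate value step rigorously; both are routine once the increment identity $w(t+1) - w(t) = z_{t+1} - y_{t+1}$ is in place, which is why the statement is flagged as immediate in the text.
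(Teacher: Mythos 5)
Your proof is correct and is precisely the argument the paper implicitly relies on when it declares Lemma~\ref{swap} ``immediate'' (the paper gives no written proof): the increment identity $w(t+1)-w(t)=z_{t+1}-y_{t+1}\in\{-1,0,1\}$ plus the discrete intermediate value theorem. Your write-up, including the smallest-$t$-with-$w(t)\ge\gamma$ argument and the handling of both orderings of $t_1,t_2$, fills in this detail rigorously and correctly.
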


\begin{lemma} \label{main_lemma}
Given $n>0$, $\by, \bz \in\{0,1\}^{n}$, where $\operatorname{wt}(\by) > \alpha$ and $\operatorname{wt}(\bz) < \alpha$. We then conclude:
\begin{enumerate}[label=\roman*)]
\item There exists $t$, $1 \leqslant t \leqslant n$,  such that  $\operatorname{wt}(\operatorname{Swap}_{t}(\bz,\by)) = \lfloor \alpha \rfloor + 1 $. Furthermore, if $t$ is the smallest index satisfying this condition, then the $t$-th position of $\operatorname{Swap}_{t}(\bz,\by)$ is $1$. 
\item Similarly, we have the following claim. There exists $t$, $1 \leqslant t \leqslant n$,  such that  $\operatorname{wt}(\operatorname{Swap}_{t}(\by,\bz)) = \lfloor \alpha \rfloor $. Furthermore, if $t$ is the smallest index satisfying this condition, then the $t$-th position of $\operatorname{Swap}_{t}(\bz,\by)$ is $1$.
\end{enumerate}
\end{lemma}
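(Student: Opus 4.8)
The plan is to analyze each swap as a function of the single cut index $t$ and to exploit the fact that the resulting weight changes by at most one as $t$ increments --- a discrete intermediate-value phenomenon, which is essentially the content of Lemma~\ref{swap}, but here I refine it to track \emph{which} bit causes each change.

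For part (i), define for $0 \le t \le n$ the partial weight $g(t) = \operatorname{wt}(\operatorname{Swap}_t(\bz,\by)) = \operatorname{wt}(y_1\cdots y_t) + \operatorname{wt}(z_{t+1}\cdots z_n)$, so that $g(0) = \operatorname{wt}(\bz)$ and $g(n) = \operatorname{wt}(\by)$. The crucial observation is the telescoping identity $g(t) - g(t-1) = y_t - z_t \in \{-1,0,1\}$. Since weights are integers with $\operatorname{wt}(\bz) < \alpha$ and $\operatorname{wt}(\by) > \alpha$, one gets the clean bounds $g(0) \le \lfloor\alpha\rfloor$ and $g(n) \ge \lfloor\alpha\rfloor + 1$. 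I would then let $t$ be the smallest index in $\{1,\dots,n\}$ with $g(t) \ge \lfloor\alpha\rfloor+1$, which exists because $g(n) \ge \lfloor\alpha\rfloor + 1$. Minimality (together with $g(0) \le \lfloor\alpha\rfloor$ to cover the boundary case $t-1=0$) forces $g(t-1) \le \lfloor\alpha\rfloor$, and the unit-step bound gives $g(t) \le g(t-1)+1 \le \lfloor\alpha\rfloor+1$; hence $g(t) = \lfloor\alpha\rfloor+1$, which settles existence. For the refinement, the same two inequalities force $g(t) - g(t-1) = 1$, i.e. $y_t - z_t = 1$, so $y_t = 1$. Since the $t$-th bit of $\operatorname{Swap}_t(\bz,\by) = y_1\cdots y_t z_{t+1}\cdots z_n$ is precisely $y_t$, that bit equals $1$, as claimed.

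Part (ii) is symmetric: set $h(t) = \operatorname{wt}(\operatorname{Swap}_t(\by,\bz)) = \operatorname{wt}(z_1\cdots z_t) + \operatorname{wt}(y_{t+1}\cdots y_n)$, so that $h(0) = \operatorname{wt}(\by) \ge \lfloor\alpha\rfloor+1$, $h(n) = \operatorname{wt}(\bz) \le \lfloor\alpha\rfloor$, and $h(t)-h(t-1) = z_t - y_t \in \{-1,0,1\}$. Taking $t$ to be the smallest index with $h(t) \le \lfloor\alpha\rfloor$ yields $h(t) = \lfloor\alpha\rfloor$ exactly, by the identical first-crossing argument, and the increment at $t$ is forced to equal $-1$, i.e. $z_t - y_t = -1$, so $y_t = 1$ and $z_t = 0$. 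The statement records this through $\operatorname{Swap}_t(\bz,\by)$, whose $t$-th bit is again $y_t = 1$; it is deliberately the same bit $y_t$ that is being flagged in both parts, which is exactly what lets a single flip sit at the swapping index.

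Nothing here is computationally heavy; the delicate points are bookkeeping rather than genuine difficulty. The first is the integrality step that turns the strict inequalities $\operatorname{wt}(\bz) < \alpha < \operatorname{wt}(\by)$ into $g(0) \le \lfloor\alpha\rfloor$ and $g(n) \ge \lfloor\alpha\rfloor+1$, which must hold uniformly whether or not $\alpha$ is an integer. The second, and the real crux, is the ``smallest index'' refinement: I must show that at the very first crossing the step is exactly $\pm 1$ rather than $0$, which is what pins the offending bit to $y_t = 1$ and underpins the paper's claim that one flip at the swapping index suffices. A minor care point is that the natural reference value lives at $t=0$, just outside the stated range $1 \le t \le n$; I handle this by reading $g(0)$ and $h(0)$ as the untouched sequences $\bz$ and $\by$, while always choosing the reported index $t$ from $\{1,\dots,n\}$.
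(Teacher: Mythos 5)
Your proof is correct, and it is organized differently from the paper's. The paper proves existence by applying its Lemma~\ref{swap} (the discrete intermediate-value lemma) to the endpoints $t=0$ and $t=n$, and then proves the bit-at-$t$ refinement by contradiction: assuming $y_t=0$, it splits on $z_t$, using weight-invariance when $z_t=0$ and a second application of Lemma~\ref{swap} on the range $[0,t-1]$ when $z_t=1$ to manufacture an earlier index of weight $\lfloor\alpha\rfloor+1$, contradicting minimality. You instead run a single direct first-crossing argument: the telescoping identity $g(t)-g(t-1)=y_t-z_t\in\{-1,0,1\}$ yields existence and simultaneously forces the step at the first crossing to be $+1$, hence $y_t=1$, with no appeal to Lemma~\ref{swap} and no contradiction. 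What this buys: your argument is self-contained (it effectively re-proves Lemma~\ref{swap} and the refinement in one pass), and you also write out part (ii) explicitly --- which the paper dismisses as ``similar'' --- including the correct reading that the bit flagged in part (ii) is again $y_t$, i.e.\ the $t$-th bit of $\operatorname{Swap}_t(\bz,\by)$ rather than of $\operatorname{Swap}_t(\by,\bz)$; this is exactly the observation that lets a single flip sit at the swapping index. One small point you should make explicit: you choose $t$ as the least index with $g(t)\ge\lfloor\alpha\rfloor+1$, whereas the lemma speaks of the least index with $g(t)=\lfloor\alpha\rfloor+1$. These coincide, since your $t$ attains the equality and no smaller index can attain it (any index attaining equality is a fortiori a $\ge$-crossing), but the identification deserves one sentence so that the refinement you prove is literally the one stated.
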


\begin{proof}
Due to space constraints, we only show the correctness of the first statement. The second statement can be proved similarly. Since $\operatorname{wt}(\by) > \alpha$ and $\operatorname{wt}(\bz) < \alpha$, by using Lemma \ref{swap} with $\operatorname{wt}(\operatorname{Swap}_{0}(\bz,\by)) = \operatorname{wt}(\bz) \leqslant \floor{\alpha}$, and $\operatorname{wt}(\operatorname{Swap}_{n}(\bz,\by)) = \operatorname{wt}(\by) > \alpha $, or equivalently, $\operatorname{wt}(\operatorname{Swap}_{n}(\bz,\by)) \ge \floor{\alpha}+1$, we conclude that there exists an index $t$, $1 \leqslant t \leqslant n$, that $\operatorname{wt}(\operatorname{Swap}_t(\bz,\by)) = \floor{\alpha} + 1$. 

Let $t$ be the minimum index, which satisfies that $\operatorname{wt}(\operatorname{Swap}_{t}(\boldsymbol{z},\boldsymbol{y})) = \floor{\alpha} + 1$. Set $\bw = \operatorname{Swap}_{t}(\boldsymbol{z},\boldsymbol{y})$. We next show that $w_t = 1$. We prove this by contradiction. Suppose that, on the other hand, we have $w_t=0$. 
Note that $\bw = \operatorname{Swapt}_{t}(\bz, \by) = y_1 \ldots y_{t-1} y_t  z_{t+1} ... z_n$. If $w_t = 0$, then $y_t = 0$. We then have two following cases. 

\begin{itemize}
    \item If $z_t = 0$. We observe that $\operatorname{Swap}_{t-1}(\bz, \by) = y_1 \ldots  y_{t-1} z_t z_{t+1} ... z_n$ has the same weight with $\operatorname{Swap}_{t}(\bz, \by)$. Since $t -1 < t$, we have a contradiction.

    \item If $z_t = 1$. Since $y_t = 0$, it is easy to verify that $\operatorname{wt}(\operatorname{Swap}_{t-1}(\bz,\by)) = \operatorname{wt}(\operatorname{Swap}_{t}(\bz,\by)) + 1 = \floor{\alpha} + 2$. Moreover, we have $\operatorname{wt}(\operatorname{Swap}_{0}(\bz,\by)) = \operatorname{wt}(z) \leqslant \floor{\alpha} $. According to Lemma \ref{swap}, there exists an index $t'$, where $ 0 \leqslant t' \leqslant t - 1$, such that $\operatorname{wt}(\operatorname{Swap}_{t'}(\bz,\by)) = \floor{\alpha}+ 1$. Again, we have a contradiction. 
\end{itemize}
    
In conclusion, if $t$ is the smallest index satisfying the weight condition, then the $t$-th position of $\operatorname{Swap}_t(\bz,\by)$ must be $1$.
\end{proof}
%and $c = \lceil\frac{1}{g(n)}\rceil\lceil\log(n) + 4\rceil$. {\color{red}{In general, we have  $c=O(\log n)$}}. 

\subsection{The Divide-and-Conquer Procedure: Main Idea}\label{main_section}
%We now present the divine-and-conquer procedure. 
Suppose that at some encoding step $i$, we have an array $\bS$ containing $n_i$ columns, and the overall weight $\operatorname{wt}(\bS)\le \floor{n_i\alpha}$ for some $\alpha>0$. In the initial step $i=0$, we have $n_0=n$. 
\begin{itemize}
\item {\bf (Case 1)} If $n_i$ is even, the encoder divides $\bS$ into two subarrays $L_{\bS}$ and $R_{\bS}$ of equal size, each containing exactly $n_i/2$ columns. The encoder then uses Lemma~\ref{main_lemma} to swap the bits between two subarrays until the weight of each subarray is at most $\floor{(n_i/2)\alpha}$. The number of swapped bits, called {\em the swapping index}, will be recorded for the decoding procedure. In addition, a flipping action to convert a `1' to a `0' in one of the subarrays may be needed, and we refer it to as the {\em flipping index}. 

\item {\bf (Case 2)} On the other hand, if $n_i$ is odd, the encoder divides $\bS$ into: $L_{\bS}$ consisting of the first $(n_i+1)/2$ columns, and $R_{\bS}$ consisting of the remaining $(n_i-1)/2$ columns. To use Lemma~\ref{main_lemma} (which only applies to two sequences of equal length), the encoder removes one column from $L_{\bS}$ (the column's index will be recorded, and referred to as the {\em excluding index}) and proceeds to the swapping algorithm. The goal is to ensure that the weight of $L_{\bS}$ is at most $\floor{((n_i+1)/2)\alpha}$, while the weight of $R_{\bS}$ is at most $\floor{((n_i-1)/2)\alpha}$.
\end{itemize}
%\vspace{1mm}

\noindent{\bf Output of the divide-and-conquer procedure.} We observe that at every step $j$, for any new subarray consisting of $n_j$ columns, its weight is enforced to be at most  $\floor{n_j\alpha}$. The procedure terminates when each subarray contains exactly one column, and the weight is at most $\floor{\alpha}$. Therefore, if the overall weight in the initial step of an array is at most $\floor{mf(n)}$, then in the final step, the weight of every column is at most $\floor{mf(n)/n}$. There are three types of indices: 
%Our encoding procedure introduces three types of indices: 
\begin{itemize}
    \item \textbf{Swapping index} ($t_i$): represents the number of swapped bits between $L_{S}$ and $R_{S}$. Recall that all created subarrays have exactly $m$ rows. If the array $\bS$ consists of $n_i$ columns, we need $\ceil{\log(mn_i/2)}$ bits to record $t_i$. 
    \item \textbf{Flipping index} ($\tau_i$): indicates the need to transform a `1' to a `0' in one of the subarrays. Here, $\tau_i\in \{0,1,2\}$. Particularly, $\tau_i = 0$ indicates that no flipping action is needed, $\tau_i = 1$ indicates that such a flipping action is done in the left subarray $L_{\bS}$, while $\tau_i = 2$ indicates that the flipping action is done in the right subarray $R_{\bS}$. Hence, two redundant bits are sufficient to record $\tau_i$. 

    \item \textbf{Excluding index} ($\Gamma_i$): only be used when $n_i$ is odd. In other words, if $n$ is a power of $2$, then $\Gamma_i=0$ for all $i$. This index indicates the column in $\bL_{\bS}$ that would be excluded before we swap the bits. It is easy to see that we need $\ceil{\log(n_i+1)/2}$ bits to record $\Gamma_i$.
\end{itemize}

\subsection{The Divide-and-Conquer Procedure: Detailed Steps}

%For completeness, we present the details of the swapping algorithm as follows. 
%\vspace{1mm}

%{\bf Swapping Algorithm}
%\vspace{1mm}

%{\sc Input}: $\bS$ containing $n_i$ columns, $\operatorname{wt}(\bS) \leqslant \floor{n_i\alpha}$\\
%{\sc Output}: 
%\begin{itemize}
%\item If $n_i$ is even, {\bf output} two subarrays, where the weight of each subarray is at most $\floor{(n_i/2)\alpha}$
%\item 
%\end{itemize}
 %\\[-2mm]
%\vspace{1mm}

Consider an array $\bS$ containing $n_i$ columns, $\operatorname{wt}(\bS) \leqslant \floor{n_i\alpha}$. 

\noindent{\bf Case 1:}  $n_i$ is even.  Set $L_{\bS}$ be an array consisting of the first $n_i/2$ columns of $\bS$ and $R_{\bS}$ consists of all remaining $n_i/2$ columns. Recall that the number of rows in two subarrays is $m=n-c$. In the trivial case, if $\operatorname{wt}(L_{\bS}), \operatorname{wt}(R_{\bS})$ $\leqslant \floor{(n_i/2)\alpha}$, then no swap is needed and the swapping index is $t_i=0$. In addition, we have $\tau_i=0, \Gamma_i=0$.  
\vspace{1mm}

On the other hand, if one subarray has weight strictly more than $\floor{(n_i/2)\alpha}$. W.l.o.g, suppose that ${\rm wt}(L_{\bS})>(n_i/2)\alpha$ while ${\rm wt}(R_{\bS})< (n_i/2)\alpha$. We then swap the bits in $L_{\bS}$ and $R_{\bS}$ in the order from left to right, and column by column, respectively (see Figure~\ref{example}). Observe that such a swapping iteration does not change the weight in each row of $\bS$. According to Lemma~\ref{main_lemma}, there exists an index $t_i$, $1 \leqslant t_i \leqslant mn_i/2$, such that 
\begin{equation*}
\operatorname{wt}(\operatorname{Swap}_{t_i}(R_{\bS}, L_{\bS})) = \floor{(n_i/2)\alpha} + 1,
\end{equation*}
and the $t_i$-th position of $\operatorname{Swap}_{t_i}(R_{\bS}, L_{\bS})$, is equal to $1$. Moreover, for such an index $t_i$, it is easy to verify that 
\begin{equation*}
\operatorname{wt}(\operatorname{Swap}_{t_i}(L_{\bS}, R_{\bS}))\le  \floor{(n_i/2)\alpha},
\end{equation*} 
since the total weight of these two subarrays is bounded above by $\floor{n_i\alpha}$. Next, the encoder flips the $t_i$-th position of $\operatorname{Swap}_{t_i}(R_{\bS}, L_{\bS})$ from $1$ to $0$ (i.e. we have the corresponding flipping index $\tau_i=2$). Thus, the weight of the two subarrays after the swapping step and the flipping step is at most $\floor{(n_i/2)\alpha}$. Similarly, if ${\rm wt}(R_{\bS})>(n_i/2)\alpha$, then the flipping action is done over $\operatorname{Swap}_{t_i}(L_{\bS}, R_{\bS})$ and we have $\tau_i=1$. 

In conclusion, the swapping index $t_i$, the flipping index $\tau_i \in\{0,1,2\}$ are recorded for the decoding procedure. 
\vspace{1mm}

\noindent{\bf Case 2:}  $n_i$ is odd, i.e. $n_i=2n_i'+1$. Let $L_{\bS}$ be an array consisting of the first $(n_i'+1)$ columns, and $R_{\bS}$ be an array consisting of the remaining $n_i'$ columns. Again, if $\operatorname{wt}(L_{\bS}) \le \floor{(n_i'+1)\alpha}$ and $\operatorname{wt}(R_{\bS})\leqslant \floor{n_i'\alpha}$, then no action is needed. In such a case, we have $t_i=\tau_i=\Gamma_i=0$. 
%\vspace{0.01mm}
\vspace{1mm}

\noindent {\bf Case 2a:} If $\operatorname{wt}(L_{\bS}) > (n_i'+1)\alpha$ while $\operatorname{wt}(R_{\bS})< n_i'\alpha$. The encoder identifies the column with {\em minimum weight} in $L_{\bS}$ and sets $L_{\bS}'$ be the new subarray consisting of the other $n_i'$ columns in $L_{\bS}$. Clearly, we have $\operatorname{wt}(L_{\bS}') > n_i'\alpha$. The excluding index $\Gamma_i$ is then $1\le \Gamma_i \le n_i'+1$.  

Next, the encoder proceeds to the swapping step for $L_{\bS}'$ and $R_{\bS}$, where $\operatorname{wt}(L_{\bS}') > n_i'\alpha$ and $\operatorname{wt}(R_{\bS})< n_i'\alpha$. Similarly, according to Lemma~\ref{main_lemma}, there exists an index $t_i$, $1 \leqslant t_i \leqslant mn_i'$: 
\begin{equation*}
\operatorname{wt}(\operatorname{Swap}_{t_i}(R_{\bS}, L_{\bS}')) = \floor{n_i'\alpha} + 1,  %\operatorname{wt}(\operatorname{Swap}_{t_i}(L_{\bS}', R_{\bS})) \le \floor{n_i'\alpha}.
\end{equation*}
and the $t_i$-th position of $\operatorname{Swap}_{t_i}(R_{\bS}, L_{\bS}')$ is equal to $1$. Moreover, we verify that with such an index $t_i$, the total weight of $\operatorname{Swap}_{t_i}(L_{\bS}', R_{\bS}')$ and the excluding column $\Gamma_i$ is at most 
\begin{equation*}
\floor{(2n_i'+1)\alpha} - \floor{n_i'\alpha} - 1 \le \floor{(n_i'+1)\alpha}.  %\operatorname{wt}(\operatorname{Swap}_{t_i}(L_{\bS}', R_{\bS})) \le \floor{n_i'\alpha}.
\end{equation*}
Next, the encoder flips the $t_i$-th position of $\operatorname{Swap}_{t_i}(R_{\bS}, L_{\bS}')$ from $1$ to $0$. Finally, at the end of step $i$, the encoder obtains two subarrays satisfying the weight constraint. 

Again, it records all the indices: $1\le t_i\le m(n_i-1)/2$, $\tau_i \in\{0,1,2\}$, and $1\le \Gamma_i \le (n_i+1)/2$.  
\vspace{1mm}

\noindent {\bf Case 2b:} If $\operatorname{wt}(L_{\bS}) < (n_i'+1)\alpha$ while $\operatorname{wt}(R_{\bS})> n_i'\alpha$. In this case, the encoder identifies the column with {\em maximum weight} in $L_{\bS}$ and sets $L_{\bS}'$ be the new subarray consisting of the other $n_i'$ columns in $L_{\bS}$. Clearly, we have $\operatorname{wt}(L_{\bS}') < n_i'\alpha$. Similarly, the encoder proceeds to the swapping step for $L_{\bS}'$ and $R_{\bS}$, where $\operatorname{wt}(L_{\bS}') < n_i'\alpha$ and $\operatorname{wt}(R_{\bS})>n_i'\alpha$, and uses Lemma~\ref{main_lemma} to find the swapping index and the flipping index. 
\vspace{1mm}

%Note that if $n$ is a power of two, then case 2 does not occur, i.e. the excluding indices $\Gamma_i=0$ for all $i$. 

After the divide-and-conque procedure, we have an array $B$ of size $(n-c)\times n$ where the weight of each row is at most $f(n)$, and the weight of each column is at most $\floor{mf(n)/n}$. 
    
\subsection{Encoding The Redundancy}
Let ${\rm Re}(n)$ be the sequence obtained by concatenating the binary representations of all the indices $(\Gamma_i, t_i, \tau_i)$ from the divide-and-conquer procedure for a given array of size $m\times n$. For all integers $N$, $1\le N\le \floor{n/2}$, we have: 
\begin{small}
\begin{align*}
&|{\rm Re}(2N)|\le 2 \times |{\rm Re}(N)| + \ceil{\log mN} +2, \text{ and } \\
&|{\rm Re}(2N+1)| \\
&\le |{\rm Re}(N+1)| + |{\rm Re}(N)| + \ceil{\log mN} +\ceil{\log (N+1)}+2. 
\end{align*}
\end{small}
Via mathematical induction, we can show that the size $|{\rm Re}(n)| \le n(6+\log m) < n(6+\log n)$ bits. %Note that such an upper bound can be further reduced in the special case when $n$ is a power of two, i.e. the encoder does not store the excluding index. It is easy to verify that for such a case, the size $|{\rm Re}(n)| < n(4+\log n)$ bits. 

\begin{comment}
%\begin{small}
\begin{align*}
\sum_{i=2^j, 2\le i\le n} &\underbrace{\left( \log ((n-c)i/2).\frac{n}{i} \right)}_{\text{record } t_i} +\underbrace{\left( 2.\frac{n}{i} \right)}_{\text{record } \tau_i}+ \underbrace{\left( \log \ceil{i/2}.\frac{n}{i} \right)}_{\text{record } \Gamma_i} \\
&\le n(1+\log n)+ 2n+ n = n(\log n+4) \text{ bits}.
\end{align*}
%\vspace{1mm}
\end{comment}

The encoder then encodes the ${\rm Re}(n)$ into an array $C$ of size $c \times n$. %such that every row of $C$ is $f$-bounded, while the weight of every column is at most $\floor{cf(n)/n}$. 
For simplicity, we assume that $\log n$ is integer, and $\operatorname{Re}(n)=x_1 x_2 \ldots x_{n(\log n+6)}$. We fill the bits $\operatorname{Re}(n)$ into $C$ as follows: in every row (or column) there are at least $(\ceil{n/f(n)}-1)$ 0's between two consecutive bits in $\operatorname{Re}(n)$. Formally, for $1 \leqslant i \leqslant n$, $i=i_0 \times \ceil{n/f(n)}+j$ for some integers $i_0, j$, where $1\le j\le \ceil{n/f(n)}$, we set the $i$th column: %$C^i = (0^{j-1} x_{(\log n+6) (i-1) + 1} 0^{\lceil\frac{n}{f(n)}\rceil-j}) \ldots (0^{j-1} x_{(\log n +6) i}0^{\lceil\frac{n}{f(n)}\rceil-j}),$
%\begin{small}
\begin{align*}
C^i = &(0^{j-1} x_{(\log n+6) (i-1) + 1} 0^{\lceil\frac{n}{f(n)}\rceil-j}) \ldots \\
&\ldots (0^{j-1} x_{(\log n +6) i}0^{\lceil\frac{n}{f(n)}\rceil-j}),
\end{align*}
where $0^m$ denotes $m$ consecutive 0's. Clearly, 
\begin{itemize}
\item {\bf Every row of $C$ is $f$-bounded.} The weight of every $\ceil{n/f(n)}$ consecutive bits is at most one, and hence, the weight of each row of $C$ is at most $n/\ceil{n/f(n)} \le f(n)$.
\item {\bf Every column of $C$ has weight at most $\floor{cf(n)/n}$.} Similarly, since the weight of every $\ceil{n/f(n)}$ consecutive bits is at most one, the weight of every column of size $c$ of $C$ is at most $c/\ceil{n/f(n)} \le cf(n)/n$. Since the weight is an integer number, it is then less than $\floor{cf(n)/n}$. 
\end{itemize}

Finally, the encoder outputs the concatenation of $B$ and $C$, which is an array $D$ of size $n\times n$. Clearly, every row of $D$ is $f$-bounded, while the weight of every column of $D$ is at most 
\begin{align*}
\floor{mf(n)/n} + \floor{cf(n)/n} &\le \floor{mf(n)/n+cf(n)/n} \\
&=\floor{f(n)} \le f(n).
\end{align*}

We illustrate the idea of the divide-and-conquer procedure in Figure~\ref{example}. 
We now discuss the efficiency of our encoder. Let ${\bf c}_f$ denote the channel capacity of the 1D constraint, i.e. given a function $f$, the weight of every codeword of length $n$ is at most $f(n)$. Suppose that ${\bf r}_{\phi}, {\bf Re}_\phi$ are the rate and the redundancy of the capacity-approaching 1D encoding map $\phi$, respectively. Here, ${\bf r}_{\phi}=1-{\bf Re}_\phi/n$ and $\lim_{n\to \infty} {\bf r}_{\phi}={\bf c}_f$. 
%We now show that if the map $\phi$ provides capacity-approaching 1D codes, i.e. $\lim_{n\to \infty} {\bf r}_{\phi}={\bf c}_f$, then our encoder provides capacity-approaching 2D codes for all $f=\omega(\log n)$. 

\begin{theorem}\label{main-theorem}
Our encoder provides capacity-approaching 2D codes for all $f=\omega(\log n)$. 
In addition, the channel capacity of the 2D constraint is equal to the channel capacity of the 1D constraint. In other words, we have ${\bf cap}_f \equiv {\bf c}_f$. %for all $f=\omega(\log n)$. 
\end{theorem}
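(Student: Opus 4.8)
The plan is to compute the rate of the 2D encoder exactly, show that it tends to the 1D capacity ${\bf c}_f$ whenever $f=\omega(\log n)$, and then squeeze ${\bf cap}_f$ between ${\bf c}_f$ and ${\bf c}_f$. First I would pin down $|\C|$. The encoder consumes exactly the $mk$ data bits fed into $\phi$ across the $m=n-c$ rows; the divide-and-conquer stage only rearranges bits, and the array $C$ merely records the resulting indices, so no further input is consumed. Since every swap is reversible and, by Lemma~\ref{main_lemma}, any flip occurs precisely at the already-recorded swapping index $t_i$, the map $\bx\mapsto D$ is injective. Hence $|\C|=2^{mk}$ and
\[
{\bf r}_\C=\frac{mk}{n^2}=\Big(1-\frac{c}{n}\Big){\bf r}_\phi .
\]

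Next I would control $c$. From $c=\lceil n/f(n)\rceil\lceil\log n+6\rceil$ and $\lceil x\rceil\le x+1$ we obtain
\[
\frac{c}{n}\le\frac{\log n+7}{f(n)}+\frac{\log n+7}{n},
\]
and both summands vanish as $n\to\infty$ exactly because $f=\omega(\log n)$ forces $(\log n)/f(n)\to 0$. Combined with the hypothesis $\lim_{n\to\infty}{\bf r}_\phi={\bf c}_f$, this yields $\lim_{n\to\infty}{\bf r}_\C=\lim_{n\to\infty}(1-c/n){\bf r}_\phi={\bf c}_f$.

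It then remains to identify ${\bf cap}_f$ with ${\bf c}_f$. Writing $M(n)$ for the number of $f$-bounded length-$n$ sequences, so that ${\bf c}_f=\lim_{n\to\infty}(\log M(n))/n$, the upper bound follows because each of the $n$ rows of any $A\in\B(n;f)$ is itself $f$-bounded; therefore $|\B(n;f)|\le M(n)^n$, which gives
\[
\frac{\log|\B(n;f)|}{n^2}\le\frac{\log M(n)}{n}\ \Longrightarrow\ {\bf cap}_f\le{\bf c}_f .
\]
For the matching lower bound I would use the encoder itself: since $\C\subseteq\B(n;f)$ we have $|\B(n;f)|\ge|\C|=2^{mk}$, so ${\bf cap}_f\ge\lim_{n\to\infty}{\bf r}_\C={\bf c}_f$. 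Combining the two inequalities gives ${\bf cap}_f={\bf c}_f$, and since $\lim_{n\to\infty}{\bf r}_\C={\bf c}_f={\bf cap}_f$, the encoder is capacity-approaching.

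I expect the only delicate point to be the estimate $c=o(n)$: the whole argument rests on the $\omega(\log n)$ hypothesis being exactly strong enough to annihilate the $(\log n)/f(n)$ term (this is also what guarantees $c<n$ for large $n$, so that the encoder is well defined). Once that estimate is in hand, both capacity inequalities reduce to elementary counting, and the identification of the 2D rate with the 1D capacity is immediate.
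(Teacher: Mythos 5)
Your proposal is correct, and its core coincides with the paper's proof: both compute the encoder's rate as $\bigl(n^2-(n-c){\bf Re}_\phi-cn\bigr)/n^2 = mk/n^2$ and show that the overhead $c/n$ vanishes precisely because $f=\omega(\log n)$ kills the $(\log n)/f(n)$ term. You go beyond the paper in two ways, both of which add genuine rigor. First, you handle the ceilings in $c=\lceil n/f(n)\rceil\lceil\log n+6\rceil$ via $\lceil x\rceil\le x+1$, whereas the paper simply assumes ``for simplicity'' that $c=n(\log n+6)/f(n)$ is an integer; your estimate $c/n\le(\log n+7)/f(n)+(\log n+7)/n$ makes the asymptotics unconditional. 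Second, and more substantively, the paper's proof only establishes $\lim_{n\to\infty}{\bf r}_{\enc_f}={\bf c}_f$ and then asserts ${\bf cap}_f\equiv{\bf c}_f$; strictly speaking this argument yields only the lower bound ${\bf cap}_f\ge{\bf c}_f$ (since the code $\C$ sits inside $\B(n;f)$), and the matching upper bound is left implicit. Your sandwich argument supplies the missing half explicitly: every row of an array in $\B(n;f)$ is $f$-bounded, so $|\B(n;f)|\le M(n)^n$ and hence ${\bf cap}_f\le{\bf c}_f$. Your remark that injectivity of $\bx\mapsto D$ (hence $|\C|=2^{mk}$) rests on the flip occurring at the recorded swapping index of Lemma~\ref{main_lemma} is also a point the paper uses silently. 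In short, same route, but your version closes the two small gaps the paper leaves open.
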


\begin{proof}
It is easy to verify that the total redundancy for encoding an $n \times n$ array is $(n-c) {\bf Re}_{\phi}+ c n $ (bits), where $c=\ceil{n/f(n)} \lceil\log n + 6\rceil$. For simplicity, we assume that $c=n(\log n+6)/f(n)$ is an integer. The rate of our encoder, denoted by ${\bf r}_{{\enc}_f}$, is then 
\begin{align*}
{\bf r}_{{\enc}_f} &= \frac{n^2-(n-c) {\bf Re}_{\phi}-cn}{n^2} \\
&=  \frac{n^2- n {\bf Re}_{\phi} + c{\bf Re}_\phi-cn}{n^2} \\
&= \Big( 1-\frac{{\bf Re}_{\phi}}{n} \Big) + \frac{{\bf Re}_\phi(\log n+6)}{nf(n)}-\frac{\log n+6}{f(n)} \\
&= {\bf r}_{\phi}+ \frac{{\bf Re}_\phi(\log n+6)}{nf(n)}-\frac{\log n+6}{f(n)}. 
\end{align*}
For all $f=\omega(\log n)$, we have $\lim_{n\to\infty} \frac{\log n+6}{f(n)} =0$. Therefore, $\lim_{n\to\infty} {\bf r}_{{\enc}_f} = \lim_{n\to \infty} {\bf r}_{\phi}={\bf c}_f$. We then conclude that our encoder provides capacity-approaching 2D codes for all $f=\omega(\log n)$, and ${\bf cap}_f \equiv {\bf c}_f$. 
\end{proof}%Example~\ref{swap:example}.

  \begin{figure}[t!]
    \centering
    \includegraphics[width=0.37\textwidth]{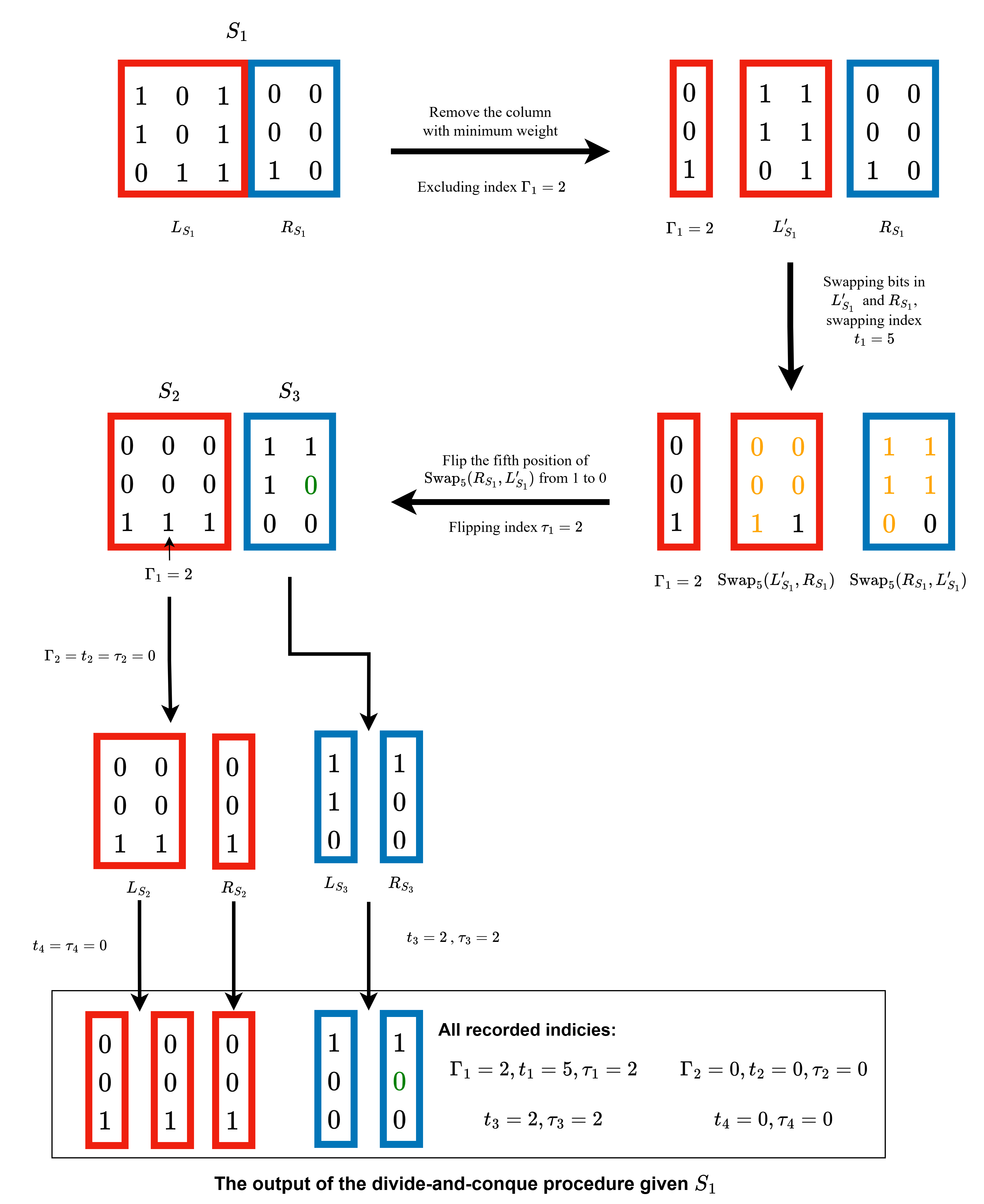}
   \caption{Given an array $\bS_1$ consisting of $5$ columns, and $\alpha = 1.5$. It is uniquely encodable and decodable given all indices $(\Gamma_i, t_i, \tau_i)$.}\label{example}
    \end{figure} 

\begin{remark}
It is important to highlight that our approach does not require the constraint $cf(n)/n \in \mathbb{N}$ (as required in \cite{nguyen2023two}). Therefore, it simplifies the parameter selection process while preserving both theoretical efficiency and practical flexibility.
\end{remark}

\section{Conclusion}
We have presented a universal framework to design 2D codes that guarantee the weight of every row and every column of length $n$ to be at most $f(n)$ for arbitrary function $f(n)$, through efficient encoding and decoding algorithms. %Our proposed encoder has linear-time complexity. 
We have shown that if a design of capacity-approaching 1D codes exists, then our encoder provides capacity-approaching 2D codes for all $f=\omega(\log n)$.

%SECTION CONCLUSION 

% \bibitem{kees:book} K. A. S. Immink, {\em Codes for mass data storage systems}, 2nd ed. Eindhoven, The Netherlands: Shannon Foundation, 2004.

%\bibliographystyle{unsrt}
%\bibliography{Lab_bibliography}

\end{document}